\def\E{\mathbb{E}}
\def\eps{\epsilon}
\def\1{\mathbf{1}}
\def\ER{Erd\H{o}s-R\'{e}nyi }
\newtheorem{thm}{Theorem}
\newtheorem{prop}{Proposition}
\begin{document}

\title{On the Resilience of Bipartite Networks}

\author{
Shelby Heinecke\inst{1}
Will Perkins\inst{2}\and Lev Reyzin\inst{1}}

\institute{Department of Mathematics, Statistics, and Computer Science\\ University of Illinois at Chicago\\
   Chicago, IL, United States \\
 \email{\{sheine4,lreyzin\}@uic.edu} \vskip .1in
 \and
 School of Mathematics\\ University of Birmingham\\ Birmingham, England\\
 \email{math@willperkins.org}
}

\maketitle

\begin{abstract}
Motivated by problems modeling the spread
of infections in networks,
in this paper we explore which bipartite 
graphs are most resilient to widespread infections under various
parameter settings.
Namely, we study bipartite networks 
with a requirement of a minimum degree $d$
on one side under
an independent infection, independent transmission
 model.
We completely characterize the optimal graphs in the case $d=1$,
which already produces 
non-trivial behavior, and we give
 extremal results for the more general cases. We show that in the case $d=2$, surprisingly, the optimally resilient set of graphs includes a graph that is not one of the two ``extremes" found in the case $d=1$. 
 
 Then, we briefly examine the case where we force a connectivity requirement instead of a one-sided degree requirement and again, we find that the set of the most resilient graphs contains more than the two ``extremes."
 We also show that determining the subgraph of an arbitrary bipartite graph most resilient to infection is NP-hard for any one-sided minimal degree $d \ge 1$.  
\end{abstract}


\section{Introduction}

The goal of our work is to study the resilience of bipartite networks to the spread of diseases, viruses, or other contagion.
In our case, the bipartite networks will 
represent an interaction between two types of agents.
Examples of such networks 
include clients and servers or persons
and drinking wells.  In the former, one
may need to connect
clients to servers in order to minimize
the propagation of computer viruses; in the latter,
one may want to direct people to drinking
wells as to minimize the spread of infections.

Our main motivation, however, comes from the study of the 
spread of sexually transmitted diseases in heterosexual
contact networks.  This problem has been studied in the economics community, with
the assumption that each gender has some (possibly asymmetric) partner distribution.
An influential paper in economics \cite{kremer1996integrating} shows that in a mean-field model of HIV infection, strategic behavior on the part of individuals can lead to two extreme equilibria, one in which all individuals have a moderate number of partners and one in which some individuals have very few partners and other individuals have very many partners.  We study the same problem in the setting of finite networks.

Namely, the model we employ has been used by Blume~et~al.~\cite{blume2011network,blume2011networks} to study the network resilience 
problem in uniform-degree graphs.  In a variant of this model, vertices represent agents in the network
and edges represent pairwise interactions among the agents.  Each agent has an independent probability
of being initially infected and can further infect neighboring agents with some probability (see Section~\ref{sec:Model} for details). 
 
Moreover, to correspond to the motivation above, we require the interaction graph to be bipartite as well as
have minimum degree on one side of the bipartition.  This is a weaker restriction than that of Blume~et~al.~\cite{blume2011networks} and
allows for a larger class of graphs.

We study extremal and computational aspects of the model. Among our results, we show the following:
\begin{itemize}
\item We extend the analysis of the susceptibility of networks to infection to the bipartite case, motivated by problems in which there are two types of agents, such as computer terminals/servers, human sexual networks, and maps of shared resources.
\item We show that the objective function, the expected fraction of infected individuals in the network, corresponds for specific choices of parameters to the expectation of natural functions under independent edge percolation, a widely studied model in probability and combinatorics.
\item We characterize optimal graphs when one side of the bipartition has uniform degree 1 and for higher degree give optimal graphs for extremal choices of parameters. (Theorems \ref{d1thm} and \ref{d2thm}).
\item We show that the two optimally resilient ``extreme" graphs in the $d=1$ case are not sufficient in the $d=2$ case (Theorem \ref{d3thm}).
\item We show that if we instead force a connectivity requirement in lieu of a one-sided degree requirement, we again find that the two obvious ``extremes" are not sufficient.
\item We show that finding an optimal subnetwork of an arbitrary graph is
NP-hard even when the one-sided degree restriction is $d = 1$. (Theorem \ref {HardThm}).
\end{itemize}

\section{Model}\label{sec:Model}

In this work, we are concerned with balanced bipartite graphs on $2n$ nodes.
In a balanced bipartite graph $G=(V,E)$, we have $V = L \cup R$, with $|L| = |R| = n$.
Our graphs will also have the following asymmetric degree restriction: all vertices
in $R$ have degree (exactly) $d > 0$.

On such a graph $G$, the following infection process occurs.  Each node $v$  becomes
infected independently at random with probability $\mu$ `by nature'.  Then, infected nodes
spread their infections independently to adjacent uninfected nodes with probability $p$.  As each new node becomes infected, they have one chance to infect their uninfected neighbors.  This is known as the independent cascade model in the literature \cite{kempe2003maximizing}.

Given that the above is a random process, we analyze the expected number of infected nodes
for a given choice of $n$, $\mu$, $p$, and graph $G$.
The goal of our work is to examine which networks among all bipartite graphs of a minimal degree on one side are most resilient to the spread of infections, i.e.\ which networks have the fewest infected nodes in expectation.  We also consider the computational hardness of determining the optimal subnetwork of one-sided minimal degree $d$ of an arbitrary bipartite graph. 

Blume~et~al.~\cite{blume2011network} study this model with respect to a cost/benefit analysis.  They consider strategic vertices who receive utility for each link formed but are penalized if they become infected.  They show a gap between the optimal graphs with respect to social welfare and graphs which satisfy conditions for strategic equilibria.  In this work we are solely concerned with socially optimal graphs and do not consider strategic behavior.  


One way to interpret the model and the quantity we are minimizing is with respect to independent edge percolation.  For a fixed graph $G$ on $n$ vertices, let each edge be present independently with probability $p$.  Let $|C(v)|$ denote the size of the (random) connected component containing the vertex $v$.  Then

\begin{equation}
\label{percEQ}
 I(G) := 1- \frac{1}{n} \E \left[ \sum_{v \in G} (1-\mu)^{|C(v)|} \right]
\end{equation}
is exactly the expected fraction of infected nodes in the $(\mu, p)$ model.  

Independent edge percolation on finite graphs is widely studied in probability and combinatorics.  If $G$ is the complete graph on $n$ vertices, the model is the \ER random graph.  Edge percolation on regular lattices is the topic of percolation theory in probability, and edge percolation on more general graphs has also been studied \cite{alon2004percolation,borgs2005random,nachmias2009mean}, but typically in the context of strong conditions (the `triangle condition', conditions on expansion) that ensure certain behavior at the phase transition.  

One topic in this field  that has not been considered in depth is extremal graphs with respect to percolation properties.  Network design to minimize the spread of infections is one example of such a problem, but many more can be imagined.  In fact, several other quantities can be interpreted with regard to the spread of infections.  For example, let the random variable
\begin{equation}
\label{SusEQ}
 S(G) = \frac{1}{n} \sum_v |C(v)|
\end{equation}
be the average component size of a graph after $p$-edge percolation.  This quantity, known as the susceptibility, is fundamental in the study of random graphs (eg. \cite{borgs2005random},\cite{spencer2007birth}).  It is not hard to show that the graph in a family of $n$-vertex graphs that minimizes $\E [S(G)]$ also minimizes the expected number of infected individuals in a single-origin model of infection in which one vertex at random is infected by nature, and then the infection spreads across edges with probability $p$.  

In a different model, that of general thresholds as studied in \cite{blume2011networks}, half-regular bipartite graphs are already extremely rich.  It can be shown that for $d=1$ every possible graph can be optimal under some choice of settings (Proposition \ref{genModelProp} in Section \ref{genModel}).

\section{Independent cascade on bipartite graphs}

As in the work of Blume~et~al.~\cite{blume2011networks}, we solve the problem of finding the optimal network satisfactorily for the smallest non-trivial degree bound ($d=1$ for half-regular bipartite graphs, $d=2$ for regular graphs), and for higher $d$ we exhibit two graphs that can be optimal.  


First we characterize the $d=1$ case, which is the simplest case for this model.  We first show that,
depending on the settings of $\mu$ and $p$, different graph structures become optimal.  Moreover, we can
characterize the set of optimal solutions -- namely, the network structure that minimizes $I(G)$, the expected fraction of infected nodes,
must always be a matching or a star.
Finally, we will point out that despite the optimality of one of the two extreme cases, there is non-monotonic behavior with respect to the size of the star.

\subsection{Half-regular graphs with $d$=1}

\begin{thm}
\label{d1thm}
For $d=1$, all $n$, and all settings of $\mu$ and $p$, either the perfect matching or an $n$-star (with $n-1$ isolated vertices)  minimizes $I(G)$.
\end{thm}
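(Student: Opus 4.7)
The plan is a three-step reduction. First, I would observe that $I(G)$ is monotone in edges via the percolation interpretation~(\ref{percEQ}): adding an edge to $G$ can only merge two components of the percolated subgraph, and since $q := 1-\mu \in (0,1)$, merging weakly decreases $q^{|C(v)|}$ at every vertex $v$. Hence any optimal $G$ is edge-minimal subject to the $d=1$ constraint---every $R$-vertex has degree exactly $1$---so $G$ is a vertex-disjoint union of $L$-centered stars (together with some isolated $L$-vertices), encoded by a partition $(k_1,\ldots,k_m)$ of $n$ into positive parts recording the numbers of leaves.

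Next I would set up the algebraic bookkeeping. Conditioning on the binomial number of leaves that remain attached to a star's center after $p$-percolation yields the closed form
\begin{equation*}
f(k) \;=\; qt^k \,+\, pq^2 k\,t^{k-1} \,+\, qk(1-p), \qquad t := 1-p\mu,
\end{equation*}
for the contribution of a single $k$-leaf star to $\E\bigl[\sum_v q^{|C(v)|}\bigr]$. Define $h(k) := f(k) - k f(1) + (k-1)q$, so that $h(1)=0$; a short rearrangement then gives $\sum_i f(k_i) + (n-m)q = n f(1) + \sum_i h(k_i)$. Since the matching contributes $\sum h(k_i)=0$ and the $n$-star contributes $\sum h(k_i)=h(n)$, the theorem reduces to the claim that $\max \sum_i h(k_i) = \max(0,h(n))$ over all valid partitions.

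The heart of the argument is the lemma that $F(k) := h(k)/k$ is quasi-convex on $\{1,\ldots,n\}$: either monotone, or first strictly decreasing and then strictly increasing. Granted this, $\max_{1\le k\le n} F(k) = \max(F(1),F(n)) = \max(0, h(n)/n)$, and the inequality
\begin{equation*}
\sum_i h(k_i) \;=\; \sum_i k_i\, F(k_i) \;\le\; n \cdot \max_k F(k) \;=\; \max(0, h(n))
\end{equation*}
finishes the proof, with equality attained by the matching (when $h(n)\le 0$) or by the $n$-star (when $h(n)\ge 0$). To prove the lemma, using $1-t = p\mu$ and $P_k := 1 + t + \cdots + t^{k-1}$, a direct computation gives $\mathrm{sign}(F(k+1)-F(k)) = \mathrm{sign}(G(k))$ where $G(k) := P_k - k t^k - pq\,k(k+1)\,t^{k-1}$, together with the identity
\begin{equation*}
G(k+1) - G(k) \;=\; p(k+1)\,t^{k-1}\bigl[\,pq\mu\,k \,+\, t(\mu - 2q)\,\bigr].
\end{equation*}
When $q \le 1/3$ the bracket is non-negative, so $G$ is non-decreasing; combined with $G(1) = p(1-3q) \ge 0$, this forces $F$ monotone. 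When $q > 1/3$, the bracket is negative for small $k$ and positive for large $k$, making $G$ U-shaped; since $G(1)<0$ and $G(k)\to 1/(p\mu) > 0$ as $k\to\infty$, $G$ has at most one sign change from $-$ to $+$ on $\{1,\ldots,n\}$, which forces $F$ to be monotone decreasing or U-shaped, i.e., quasi-convex.

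The main obstacle is the algebraic verification of the identity for $G(k+1) - G(k)$: obtaining the clean bracketed form requires careful bookkeeping with the telescoping $P_{k+1} - P_k = t^k$ and with applying the substitution $1-t = p\mu$ at the right moment. Once that identity is in place, the sign analysis and the resulting quasi-convexity of $F$ together with the LP-style bound above immediately yield the theorem.
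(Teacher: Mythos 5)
Your proof is correct: I checked the closed form $f(k)=qt^k+pq^2kt^{k-1}+qk(1-p)$, the first-difference formula $F(k+1)-F(k)=\frac{qp\mu}{k(k+1)}\bigl(P_k-kt^k-pq\,k(k+1)t^{k-1}\bigr)$, and the second-difference identity $G(k+1)-G(k)=p(k+1)t^{k-1}\bigl[pq\mu k+t(\mu-2q)\bigr]$, and all three hold (note $\mu-2q=3\mu-2$ and $G(1)=p(3\mu-2)$, consistent with your $p(1-3q)$). The skeleton agrees with the paper's: both arguments reduce to star forests, compare a per-star-size objective, and show it is extremized at $k=1$ or $k=n$; indeed your $F(k)=h(k)/k$ equals $2(\E[I_1]-\E[I_k])$, so your quasi-convexity of $F$ is exactly the paper's quasi-concavity of $k\mapsto\E[I_k]$. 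Where you genuinely diverge is in how that key lemma is proved. The paper splits at $\mu=1/2$, handling $\mu\le 1/2$ by differentiating $\E[I_k]-\E[I_1]$ with respect to $p$, and $\mu>1/2$ by showing that $\Delta_1>0$ implies $\Delta_2>0$ via the ratio $A/A'=\frac{k+1}{k-1}$ and a polynomial inequality in $y=1-\mu p$, with separate sub-cases for $k=3$, $k=2$, and $\frac12\le\mu<\frac23$. Your single discrete second-difference identity, whose bracket is linear and increasing in $k$, delivers the sign pattern of the first differences in one stroke, with the only case split at $\mu=2/3$ where $G(1)$ changes sign; this is a more unified and, I think, cleaner derivation. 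You also make explicit two steps the paper leaves implicit: the percolation-coupling monotonicity that reduces minimum degree at least $1$ to degree exactly $1$, and the convex-combination step $\sum_i h(k_i)=\sum_i k_iF(k_i)\le n\max_kF(k)$ that justifies comparing only uniform star sizes. Both are worth retaining; aside from dispatching the degenerate parameter values $p\in\{0,1\}$, $\mu\in\{0,1\}$ (all trivial), nothing is missing.
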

\begin{proof}


We observe that each feasible graph is a collection of stars with (possibly) some isolated vertices in $L$.  We therefore compute the expected fraction of infected individuals in the union of a $k$-star and $k-1$ isolated vertices, call this $\E [I_k]$:
\begin{equation}
\E [I_k] = \frac{L_k + (k-1) L_0 + k R_k}{2k},
\end{equation}
where $L_j$ is the probability that a vertex of degree $j$ in $L$ is infected, and $R_j$ is the probability that a vertex in $R$ joined to a vertex of degree $j$ is infected.  Note that the expected fraction of infected individuals in a perfect matching is exactly $\E[I_1]$ and the expected fraction in an $n$-star with $n-1$ isolated vertices in $L$ is $\E[I_n]$.  We will show that for $k \in [1, n]$, $\E[I_k]$ is minimized at either $k=1$ or $k=n$, and since any feasible graph is a union of stars, this shows that either the perfect matching or $n$-star is optimal.  

  We calculate
\begin{align*}
  L_j &= 1- (1- \mu) (1 - \mu p)^j 
\end{align*}
and
\begin{align*}
 R_j &= \mu + p - \mu p - (1-\mu)^2 p (1- \mu p)^{j-1},
\end{align*}
giving
\begin{align*}
\E [I_k] &= \frac{ 1- (1-\mu)(1-\mu)^k +(k-1)\mu  }{2k}   + \frac{\mu + p -\mu p -(1-\mu)^2 p (1-\mu p)^{k-1}}{2}
\end{align*}

Now define 
\begin{align*}
Q(k) &:= \frac{ 2(\E[I_k] - \E[I_1])}{1- \mu} + 2\mu p - p \\
&= \frac{1 - (1- \mu p)^k}{k} - (1-\mu)p(1-\mu p)^{k-1}  \\
&=\frac{1 - \alpha^k}{k} - \beta \alpha^{k}
\end{align*}
where we define $\alpha = 1- \mu p$ and $\beta = \frac{(1- \mu) p}{1- \mu p}$.

We will show that whenever $\frac{d Q}{d k} \ge 0$, $\frac{d^2 Q}{d k^2} <0$, which shows that $Q$ is a unimodal function of $k$ on the interval $[1,n]$ for any $n$, and in particular takes its minimum at one of its endpoints.  Because $Q$ is a linear function of $\E[I_k]$, this shows that $\E[I_k]$ takes its minimum at either $k=1$ or $k=n$.  We can assume $\mu \in (0,1)$ and $p >0$, since otherwise all $\E[I_k]$ is equal for all $k$.  

We compute

\begin{align*}
 \frac{d Q}{d k}&= - \frac{ (1- \alpha^k) + k (1 + \beta k) \alpha^k \log \alpha}{k^2}
\end{align*}

and

\begin{equation}
\label{d2Qeq}
\frac{d^2 Q}{d k^2} = \frac{2 (1- \alpha^k) +2 k \alpha^{k} \log (\alpha) - k^2 (1 + \beta k) \alpha^k \log^2 (\alpha)}{k^3}
\end{equation}
and so 
\begin{align*}
2 k^2 \cdot \frac{d Q}{d k} +k^3 \cdot \frac{d^2 Q}{d k^2} &= -\alpha^k k^2 \log (\alpha) (2 \beta + \log(\alpha) + \beta k \log(\alpha)  )
\end{align*}

Since $\log \alpha <0$, this is negative when $2 \beta + \log(\alpha) + \beta k \log(\alpha)$ is negative, i.e. when $k > - \frac{2}{\log \alpha} - \frac{1}{\beta}$, and so for such $k$ we have that whenever $\frac{d Q}{d k} \ge 0$, $\frac{d^2 Q}{d k^2} <0$.  If $- \frac{2}{\log \alpha} - \frac{1}{\beta} <1$, then we are done, since we need $Q$ to be unimodal on $[1,n]$.  

Otherwise, for $2 \beta + \log(\alpha) + \beta k \log(\alpha) \ge 0$, we show directly that $\frac{d^2 Q}{d k^2}$ is negative. From (\ref{d2Qeq}), we see that if 
\begin{align*}
H(k) &:=2 (1- \alpha^k) +2 k \alpha^{k} \log (\alpha) - k^2 (1 + \beta k) \alpha^k \log^2 (\alpha) < 0,
\end{align*}
 then $\frac{d^2 Q}{d k^2} <0$. We compute $H(0) =0$ and 
\begin{align*}
\frac{d H}{dk} &= - k^2 \alpha^k \log^2(\alpha) ( 3 \beta +\log (\alpha) + b k \log(\alpha))
\end{align*}
which is negative when $k>0$ and $3 \beta +\log (\alpha) + b k \log(\alpha) >0$, which is true by assumption for this range of $k$ since $\beta >0$.

\end{proof}

\begin{figure}[h!]
\centering
\includegraphics*[width=60mm]{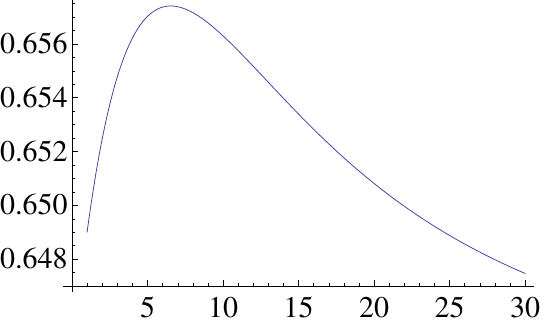}
\caption{Average infection probability as a function
of the degree of a star, for $\mu = 0.55$ and $p=0.4$.}
\label{fig:d1plot}
\end{figure}

Note that for $n$ large enough, the matching is better than the star if and only if $\mu < 1/2$.  However, there are already surprising effects
in the $d=1$ case -- for instance, while a star can be better than a matching,
a decomposition into smaller stars can be worse than either.    In Figure~\ref{fig:d1plot}, for the fixed parameters $\mu = .55, p=.4$, we plot the expected fraction of infected vertices in a $k$-star with $k-1$ isolated vertices for various values of $k$.


\subsection{Half-regular graphs with $d \ge 2$}

For $d \ge 2$, we first show two possibilities for optimal graphs.  We will prove the following proposition by solving appropriate extremal percolation problems:

\begin{thm}
\label{d2thm}
Both a collection of $K_{d,d}$'s and $K_{d,n}$ with $n-d$ isolated vertices can be optimal $d$-half-regular bipartite graphs. In particular,
\begin{enumerate}
 \item For any $p$ and any $d \ge 1$, for large enough $n$, there exists $\mu$ close enough to $1$ so that $K_{d,n}$ with $n-d$ isolated vertices is optimal.
\item For any $d$ and large enough $n$, there is a $\mu$ close enough to $0$, there exist $p$'s close enough to $0$ and to $1$ so that a collection of $K_{d,d}$'s is optimal.
\end{enumerate}
 
\end{thm}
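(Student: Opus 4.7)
The plan is to linearise $I(G)$ in the appropriate small parameter via the percolation identity~\eqref{percEQ}, reduce each case to an extremal problem on valid graphs, and solve that extremal problem; because there are only finitely many valid bipartite graphs on $2n$ vertices, a strict leading-order winner remains optimal in a neighbourhood of the limiting parameter.

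For part (1), set $\eps = 1-\mu$. Since $|C(v)|=1$ exactly when all $\deg(v)$ edges at $v$ are absent in the percolation, $\E[\eps^{|C(v)|}] = \eps (1-p)^{\deg(v)} + O(\eps^2)$, so
\[
I(G) = 1 - \frac{\eps}{2n}\sum_v (1-p)^{\deg(v)} + O(\eps^2),
\]
and at leading order I must maximise $\sum_v (1-p)^{\deg(v)}$. Because $x \mapsto (1-p)^x$ is strictly convex and strictly decreasing, the constraint $\deg(v)\ge d$ on $R$ is tight at the optimum, pinning $\sum_{v\in L}\deg(v)=nd$; strict convexity on $L$ then pushes the $L$-degree sequence to the endpoints $\{0,n\}$, leaving $d$ vertices of degree $n$ and $n-d$ vertices of degree $0$. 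That is exactly $K_{d,n}$ with $n-d$ isolated left-vertices. The leading-order gap to every other valid graph is bounded below and dominates the $O(\eps^2)$ slack once $\eps$ is small enough.

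For part (2), linearising in $\mu$ gives $I(G) = \frac{\mu}{2n}\E[\sum_v|C(v)|] + O(\mu^2)$, so leading-order minimisation of $I(G)$ becomes minimisation of the expected susceptibility $\sum_{u,v}\Pr[u\leftrightarrow v]$. In the sub-case $p\to 1$ this limits to $\sum_C |C|^2$ summed over components $C$ of $G$. Writing each $R$-containing component with $k_i\ge 1$ right- and $\ell_i\ge d$ left-vertices (each $R$-vertex needs $d$ distinct $L$-neighbours in its own component) plus $n-\sum_i\ell_i$ isolated left-vertices, the objective $\sum_i (k_i+\ell_i)^2 + n - \sum_i\ell_i$ is strictly increasing in each $\ell_i$ (so $\ell_i=d$ at the optimum), and a Cauchy--Schwarz step in $(k_i)$ followed by a simple constrained optimisation in $m$, subject to $\sum k_i=n$ and the $L$-budget $dm\le n$, forces $m=n/d$ and $k_i=d$---giving the $K_{d,d}$ collection with $\sum|C|^2=4nd$. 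Continuity in $p$ and finiteness of the comparison class then extend optimality to a neighbourhood of $p=1$. In the sub-case $p\to 0$ I expand
\[
\E\left[\sum_v|C(v)|\right] = 2n + 2p|E(G)| + p^2\sum_w \deg(w)(\deg(w)-1) + O(p^3),
\]
the $p^2$-coefficient counting ordered pairs with a common neighbour. The constraint $|E(G)|\ge nd$ forces the linear coefficient to be minimised exactly when every $R$-degree equals $d$; conditional on this, Cauchy--Schwarz on the $L$-degrees yields $\sum_w\deg(w)(\deg(w)-1)\ge 2nd(d-1)$ with equality iff $G$ is $d$-regular. A $K_{d,d}$ collection saturates both bounds, so for $p$ sufficiently small and then $\mu$ sufficiently small, a $K_{d,d}$ collection achieves the minimum of $I(G)$ (at worst tied with other $d$-regular bipartite graphs).

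The main obstacle is the Lagrangian step in the $p\to 1$ sub-case: the unconstrained optimum over $(m,k_i)$ sits at $k_i=\sqrt{d(d-1)}<d$, but the $L$-budget $dm\le n$ binds and pushes the true optimum out to $k_i=d$; one must check carefully which constraint is active to conclude that a collection of $K_{d,d}$'s, and not a collection of smaller star-like components, is the minimiser. The remaining steps are routine Taylor expansion and convexity.
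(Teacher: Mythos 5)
Your overall strategy --- linearize $I(G)$ via the percolation identity and reduce each parameter regime to an extremal graph problem --- is the same as the paper's. For part (1) and for the $p=1$ sub-case of part (2) your arguments track the paper's reductions (maximize the expected number of isolated vertices; minimize $\sum_C |C|^2$), and your details are if anything more careful: the convexity argument pushing the $L$-degree sequence to $\{0,n\}$, and the explicit check of which constraint is active in the optimization over $(m,k_i)$, fill in steps the paper dispatches with ``the contribution of the remaining $d$ vertices becomes negligible'' and ``Lagrange multipliers.''

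The genuine gap is in the $p\to 0$ sub-case of part (2). Your expansion $\E\left[\sum_v|C(v)|\right] = 2n + 2p|E(G)| + p^2\sum_w \deg(w)(\deg(w)-1) + O(p^3)$ is correct, but it only shows that a $K_{d,d}$ collection minimizes the truncation through order $p^2$, tied with \emph{every} $d$-regular bipartite graph (for $d=2$, with the $2n$-cycle; in general with any disjoint union of $d$-regular bipartite components). The parenthetical ``at worst tied with other $d$-regular bipartite graphs'' is not a conclusion about $I(G)$ itself: if some other $d$-regular graph were strictly better at a higher order in $p$, the $K_{d,d}$ collection would not be optimal, and your finiteness-plus-continuity step has no strict leading-order winner to propagate. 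Worse, the tie is not broken at order $p^3$: in a simple bipartite graph a walk $v_0v_1v_2v_3$ with $v_0\ne v_2$ and $v_1\ne v_3$ is automatically self-avoiding, so the number of length-$\le 3$ paths between vertices is determined by the degree sequence alone, and there are no inclusion--exclusion corrections below order $p^4$; hence all $d$-regular bipartite graphs agree through order $p^3$ and the real comparison lives at order $p^4$, which you do not compute. The paper sidesteps this by reducing the $p\to 0$ case to minimizing $\sum_C |E(C)|^2$ over the connected components of the \emph{unpercolated} graph $G$ --- a functional that separates a $K_{d,d}$ collection (value $nd^3$) from, say, a connected $d$-regular graph (value $(nd)^2$) already at leading order. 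To complete your route you would need to carry the expansion to order $p^4$ and verify that the $K_{d,d}$ collection wins there, or otherwise compare $d$-regular graphs exactly.
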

\begin{proof} We prove the two parts separately:\\

\noindent \textit{1.}
If we set $\mu = 1 - n^{-2}$,  the RHS in Equation \ref{percEQ} becomes
\begin{equation}\label{eq:iso}
1 - n^{-3} \E [X_0(G)] +O(n^{-4}),
\end{equation}
 where $X_0(G)$ is the number of isolated vertices after $p$-edge percolation (each edge of the graph is deleted independently with probability $1-p$). So for large enough $n$, minimizing $I(G)$ becomes equivalent to maximizing the expected number of isolated vertices in a graph after $p$-edge percolation.  Since every vertex in $R$ has the same probability of being isolated due to the degree restriction, we wish to maximize the fraction of vertices in $L$ which are isolated.  The $K_{d,n}$ configuration has $n-d$ vertices which are isolated with probability $1$, and for $n$ large enough the contribution of the remaining $d$ vertices becomes negligible. \\
 
\noindent \textit{2.}
Set $\mu = n^{-2}$.  Then $I(G)$ in Equation \ref{percEQ} becomes $$n^{-3}\E \left[\sum_v |C(v)|\right] +O(n^{-3}),$$ and so minimizing $I(G)$ becomes equivalent to minimizing $\E [S(G)]$ from Equation \ref{SusEQ}. For $p=1$, we keep all the edges and so we need to minimize  $$\sum_v |C(v)| = \sum_C |C|^2 \le \sum_{C \in \mathcal C_R} |C|^2,$$
 where the first sum is over all vertices, the second over all components, and the third over all components containing a vertex in $R$. Since a collection of $K_{d,d}$'s has no isolated vertices in $L$, showing that such a graph minimizes $ \sum_{C \in \mathcal C_R} |C|^2$ suffices.  Considering all components containing a vertex in $R$, we note that each component has at least $d$ vertices from $L$, and the sum of the number of vertices from $R$ in all components equals $n$.  Under these conditions, minimizing with Lagrange multipliers gives each component of size $2d$, which is the $K_{d,d}$ configuration.

For $p \to 0$, set $\mu = n^{-3}, p=n^{-2}$.  A similar calculation to the above shows that minimizing $I(G)$ in this case is equivalent to minimizing $\sum_C |E(C)|^2$, where the sum is over all connected components and $|E(C)|$ is the number of edges in a component $C$.  Again we can relax the minimization since $K_{d,d}$'s will have no isolated $L$ vertices, and show that a collection of $K_{d,d}$'s minimizes $\sum_{C \in \mathcal C_R} |E(C)|^2$.  There are at most $n/d$ components in $\mathcal C_R$, and the total number of edges is $nd$.  Therefore $n/d$ components of $d^2$ edges each minimizes $\sum |E(C)|^2$, which completes the proof.  
\hfill$\Box$
\end{proof}

\begin{figure}[h]
\centering
\includegraphics*[width=40mm]{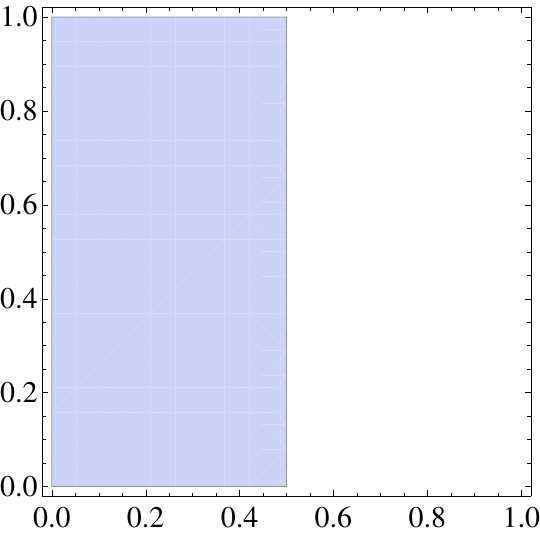}
\includegraphics*[width=40mm]{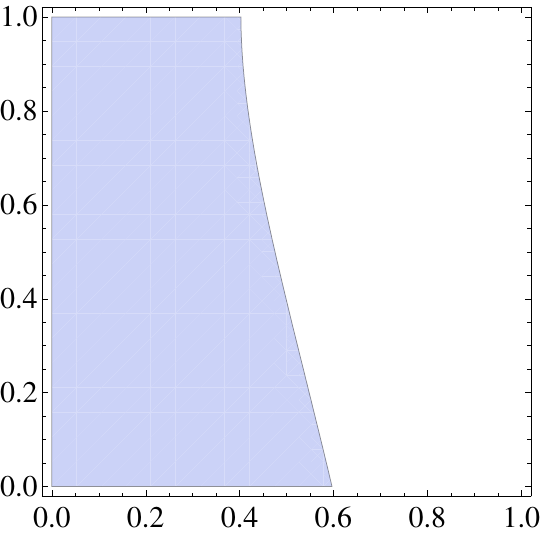}
\includegraphics*[width=40mm]{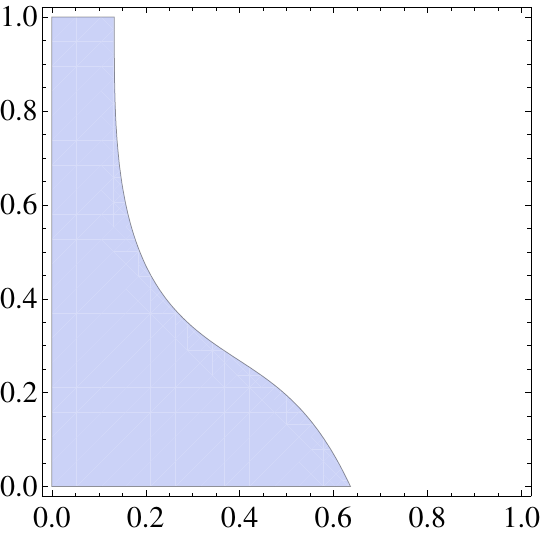}
\caption{The graphs are for $d=1$ (left), $d=2$ (center),
and $d=3$ (right), for $n \rightarrow \infty$. 
The $x$-axes are values of $\mu$, and the 
$y$-axes are values of $p$.  The colored
regions are where a $K_{d,d}$ decomposition
has a lower average infection
rate than $K_{d,n}$ with $n-d$ isolated vertices.}
\label{fig:dplots}
\end{figure}

In Figure~\ref{fig:dplots}, after solving the cases exactly, we indicate the regions in the parameter space for which $K_{d,d}$ and $K_{d,n}$ are better than one another in the large $n$ limit. It is straightforward to show that as $d \to \infty$, the cut-off for $p=1$ tends to $0$, and for $p \rightarrow 0$ the cut-off tends to $1$.

Given the results above, we might conjecture that for all $d \ge 1$ and $0 \le \mu, p \le 1$, either a $K_{d,d}$ decomposition or $K_{d,n}$ with $n-d$ isolated vertices would be the optimal $d$-half-regular, balanced bipartite graph on $2n$ vertices.  
Presently, however, we disprove such a conjecture.

\begin{thm}
\label{d3thm}
For $d=2$, there exist $2$-half-regular graphs on $2n$ nodes that are more resilient than either a $K_{2,2}$ decomposition or a $K_{2,n}$  with 
$n-2$ isolated vertices.  
\end{thm}

\begin{proof}
We take $n=4$ and consider the $2$-half-regular graph on $8$ vertices composed of a union of a $K_{3,2}$ and a $K_{1,2}$, with the degree requirement satisfied by the $3$ vertices on one side of the partition in the $K_{3,2}$ together with the $1$ vertex in the $K_{1,2}$. 

For the values $\mu = .302$ and $p = .801$, this graph is more resilient than either two copies of $K_{2,2}$ or the $K_{2,4}$ with two isolated vertices.  For these parameter settings, 
the average infection probabilities for the three graphs are approximately\footnote{We give approximate values to sufficient precision
to illustrate the difference in resilience.}  $.7197$, $.7207$, and $.7199$, respectively.  This counterexample graph was discovered via a careful computer search, 
 using Equation~\ref{percEQ}, over all half-regular graphs and a chosen set of settings for the parameters $\mu$ and $p$. 
\hfill$\Box$
\end{proof}

\subsection{A note on connected regular graphs}

We now briefly turn our attention back to the general model and consider what would happen if we disposed of any degree restriction, and instead
forced the graphs to be connected.  We show that with this different restriction, a similar phenomenon occurs as in the $d\ge2$ case, with optimally resilient graphs again
not lying on ``extremes."  Connected graphs are interesting in models where edges can be used for passing information, as well as disease.  There, finding connected
resilient graphs preserves the ability to spread information throughout the network while being as resilient as possible to the spread of disease.

If we try to find the optimally resilient connected graph for the $\mu,p$ model, we know that an optimal graph is always a tree, since any graph with cycles can have
an edge removed without hurting resiliency.  It is also interesting to note that, because of this, connectivity naturally gives us a different restriction on bipartite graphs than
half-regularity.

A connectivity requirement is somewhat different than the regular or half-regular case.  
For example, $K_{d,d}$ decompositions, which are sometimes optimal in the half-regular case, are
no longer allowed if the graph must be connected.  Similarly, for $d$-regular graphs, Blume~et~al.~\cite{blume2011networks} show that the optimal $2$-regular finite graph on $3n$ nodes is always 
a triangle decomposition; this is again not connected.

It is then natural to begin by considering the path and the star graphs.\footnote{We note that Blume~et~al.~\cite{blume2011networks} show that the infinite path
can be the optimal $2$-regular graph.}  In the case of infinite graphs, it is easy to exactly find the expected infection probability of both the infinite star and the
infinite path.  For the case of the infinite star, we can assume the center is infected (as long as $\mu, p$ are constants $> 0$), and therefore the probability of infection for a leaf is simply
\begin{equation}\label{infstar}
\mu + (1-\mu)p.
\end{equation}
In the case of the infinite path, Equation~\ref{percEQ} gives  an average infection rate of 
\begin{equation}\label{infpath}
\sum_{i=1}^{\infty}{i (1-(1-\mu)^i) p^i(1-p)^2  } = \frac{\mu p-\mu p^3 + \mu^2p^3}{p(1-p+\mu p)^2}.
\end{equation}
It is also easy to see that the quantities in Expressions~\ref{infstar} and~\ref{infpath} are upper bounds for finite stars and paths, respectively, yet either of
these can be optimal depending on the settings of $\mu$ and $p$.

The natural question again arises whether a star or a path must always be the most resilient graph, and the answer is, perhaps by now, unsurprisingly, no.

For $n=5$, we compare the $5$-path to the star on $5$ nodes to a $5$-node ``fork graph" (Figure~\ref{fig:fork}), and we show that a fork graph
can be more resilient than either one of the two ``extremes." For the values $\mu = .63$ and $p = .7$, the average infection probabilities for the star, path, and 
fork graphs are approximately .8906, .8907, and .8905, respectively.  Figure~\ref{f:pathstarfork}, computed from plotting the exact infection rates
on the three graphs shows the narrow region where the fork is more resilient than the other two extreme graphs.

\begin{figure}[h!]
\centering
\includegraphics*[width=110mm]{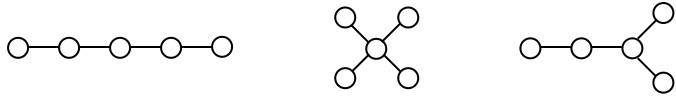}
\caption{Left to right: the path, star, and fork graphs on $5$ nodes. These graphs comprise all the trees on $5$ nodes, up to isomorphism. Hence, the most resilient
$5$-node connected graph must come from this set of graphs, $\forall\ 0 \le u,p \le 1$.}
\label{fig:fork}
\end{figure}

\begin{figure}[t!]\label{f:pathstarfork}
\centering
\includegraphics*[width=50mm]{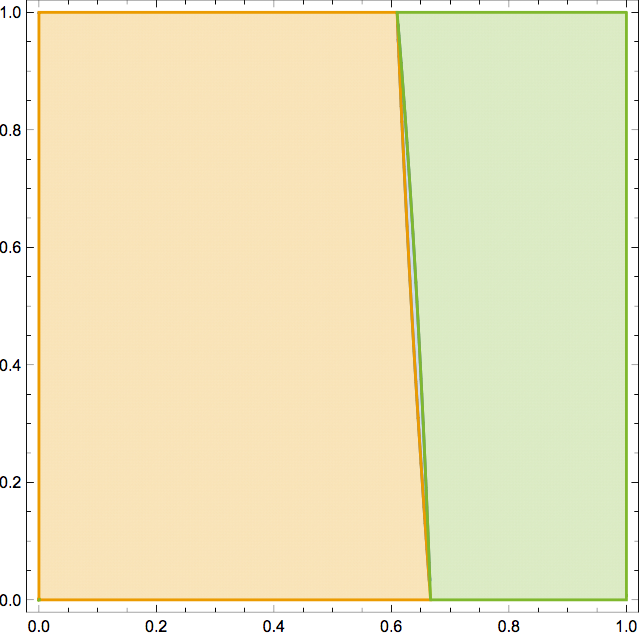}
\caption{The orange region is where the $5$-path is the most resilient $5$-node connected graph; 
the green region is where the star on $5$ nodes is the most resilient $5$-node connected graph; the
small blue region in the center is where the fork is the most resilient $5$-node connected graph.  $\mu$ runs
along the horizontal axis and $p$ runs on the vertical axis.}
\label{fig:fork}
\end{figure}

%

\section{Optimal subnetworks of arbitrary graphs}
In this section we consider the problem of finding an optimal bipartite subnetwork of arbitrary bipartite graphs.

Let $G=(V,E)$ be a bipartite graph with $V = L \cup R$ with degree $\ge d$ for vertices in $R$.
We call  the problem of finding a subgraph of $G$, $G' = (V,E')$, with minimum degree $d$ for
vertices in $R$, as to minimize $I(G')$, the \textbf{optimal bipartite subnetwork problem}.

\begin{thm}
\label{HardThm}
For all $d \ge 1$ the optimal bipartite subnetwork problem is NP-hard.
\end{thm}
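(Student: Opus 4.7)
The plan is to reduce from $d$-DOMINATING SET: given a graph $H$ and integer $k$, decide whether there exists $D \subseteq V(H)$ with $|D| \le k$ satisfying $|N_H[v] \cap D| \ge d$ for every $v \in V(H)$. This is classical DOMINATING SET when $d = 1$ and is NP-hard for every fixed $d \ge 1$, the latter reducing from $d = 1$ by, e.g., adjoining $d-1$ universal vertices to the input. I would work in the parameter regime $p = 1$ and $\mu = 1 - n^{-3}$, where by equation \eqref{percEQ} the objective $I(G')$ is governed almost entirely by the number of isolated $L$-vertices in the chosen subgraph $G'$.

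Given an instance $(H, k)$ with $V(H) = \{v_1, \dots, v_m\}$, I would build a bipartite host graph $G$ with $L = \{\ell_v : v \in V(H)\}$ and $R = \{r_v : v \in V(H)\}$, placing the edge $\ell_u r_v$ iff $u \in N_H[v]$, and padding with isolated vertices so that $|L| = |R| = n$ polynomial in $m$ (for $d \ge 2$, assume without loss of generality that $|N_H[v]| \ge d$ for every $v$, which is free after the universal-vertex padding). A subgraph $G' \subseteq G$ with min-$R$-degree $d$ then corresponds exactly to choosing, for each $v \in V(H)$, at least $d$ closed neighbors of $v$ as dominators; the set of active (positive-degree) $L$-vertices is a $d$-dominating set of $H$, and every $d$-dominating set of $H$ lifts to such a $G'$. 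With $p = 1$ and $\delta := 1 - \mu = n^{-3}$, equation \eqref{percEQ} becomes $I(G') = 1 - (2n)^{-1}\sum_v \delta^{|C(v)|}$ where $C(v)$ is the graph-theoretic component of $v$ in $G'$; isolated $L$-vertices contribute $\delta$ each, while the total contribution of vertices in components of size at least $2$ is bounded geometrically by $\sum_{j \ge 2} n \delta^{j} = O(n\delta^2) = O(n^{-5})$. Since each unit change in the number of active $L$-vertices changes $I(G')$ by $\delta/(2n) = \Theta(n^{-4})$, polynomially larger than the $O(n^{-5})$ residual, a rational threshold $\tau$ of polynomial bit-length separates the cases "$\le k$ active $L$-vertices" from "$> k$ active $L$-vertices". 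Thus the instance $(G, d, \mu, p, \tau)$ of the optimal subnetwork problem is a YES instance iff $H$ has a $d$-dominating set of size at most $k$.

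The main obstacle lies in the quantitative estimate of the last step: one must argue that $I(G')$ is a strictly monotone function of the number of isolated $L$-vertices up to an error strictly smaller than its per-unit increment, uniformly over all valid subgraphs $G'$. The geometric tail bound above handles this, but forces $\delta$ to be polynomially small in $n$; the choice $\delta = n^{-3}$ is comfortable and keeps the bit-length of $\mu$, $p$, and $\tau$ polynomial in the size of the original instance. A secondary, essentially bookkeeping, point is confirming that NP-hardness of $d$-DOMINATING SET is preserved after requiring min degree $d - 1$ on the input graph, which is immediate from the universal-vertex construction and does not change the quantitative correspondence between dominating-set size and number of active $L$-vertices.
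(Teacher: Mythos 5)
Your reduction is correct in substance, but it takes a genuinely different route from the paper's. The paper splits into three cases with three different source problems and two different parameter regimes: for $d=1$ it reduces from exact cover by $k$-sets using the $\mu \to 1$ regime of Equation~\eqref{eq:iso} (maximize isolated vertices); for $d=2$ it invokes Theorem~\ref{d2thm}, part 2 (the $\mu \to 0$ susceptibility regime, where a $K_{d,d}$-decomposition is optimal) and reduces from vertex-disjoint $K_{2,2}$-decomposition of bipartite graphs; and for $d \ge 3$ it reduces from $d$-clique partition via a bipartite lift. You instead handle all $d$ uniformly in the single regime $p=1$, $\mu = 1 - n^{-3}$, reducing from $d$-tuple domination, and the key correspondence --- that the minimum number of positive-degree $L$-vertices over feasible subgraphs equals the minimum size of a set $D$ with $|N_H[v]\cap D| \ge d$ for all $v$ --- is exactly right. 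What your approach buys is uniformity and independence from Theorem~\ref{d2thm}: the paper's $d \ge 2$ cases rest on the claim that a $K_{d,d}$-decomposition is optimal \emph{among subgraphs of an arbitrary host graph}, which is a mild strengthening of what Theorem~\ref{d2thm} literally proves (optimality among all half-$d$-regular graphs), whereas your isolated-vertex objective needs no such extremal input. What the paper's approach buys is tighter contact with the extremal results of Section 3 and hardness in a second, qualitatively different parameter regime. Your quantitative bookkeeping (per-unit increment $\delta/(2n) = \Theta(n^{-4})$ versus residual $O(n\delta^2/n) = O(n^{-6})$, hence a polynomial-bit-length separating threshold) is sound and is more careful than the corresponding step in the paper.

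Two minor points to fix. First, do not pad with isolated vertices on the $R$ side: the host graph and every feasible subgraph must have all $R$-degrees at least $d$, so isolated $R$-vertices are infeasible, and padding only $L$ breaks balance. No padding is needed, since your construction already gives $|L| = |R| = |V(H)| + (d-1)$; if you want a larger $n$, pad with disjoint $K_{d,d}$ blocks, which are forced to be retained entirely. Second, the universal-vertex reduction establishing NP-hardness of $d$-tuple domination is standard but not quite "immediate": a minimum $d$-tuple dominating set of the padded graph need not contain all $d-1$ universal vertices, so you should note that deleting any $j$ elements from a $d$-tuple dominating set leaves a $(d-j)$-tuple dominating set, which yields $\gamma_{\times d}(H') = \gamma(H) + d - 1$ and hence the exact size correspondence your threshold relies on.
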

\begin{proof}
For $d=1$ we reduce from exact set cover.  An instance of exact set cover is a family of subsets $\mathcal{F}$ of a ground set $U$.
The goal is to find a subcollection of sets $\mathcal{F}' \subseteq F$ such that each element in $U$ appears in exactly one set in $F'$.
This problem is NP-hard~\cite{Karp72}. We will assume w.l.o.g. that all sets in $F$ are the 
same size, $k$ (we can append new elements to smaller sets).

For our reduction, we construct an instance of the optimal bipartite subnetwork problem as follows.
The graph $G$ will contain vertices $L \cup R$, with $R = U$ and $L = \mathcal{F}$. We form an edge
$(l,r) \in E$, where $l \in \mathcal{F}$ and $r \in U$ if $r \in l$.
Applying Equation~\ref{eq:iso}, there is a setting of $\mu$ and $p$ such that the optimal network
will maximize the number of isolated vertices, subject to our constraints.

It is clear that if an exact cover exists, there will be subgraph of $G$ with $|\mathcal{F}| - |U|/k$ isolated vertices --
namely the one that uses all edges from the cover.  On the other hand, if there is no exact cover, the number
of isolated vertices will be $\le |\mathcal{F}| - |U|/k -1$.

For $d=2$, we use Theorem~\ref{d2thm}, part 2, that there exist settings for $\mu$ and $p$ such that a
$K_{d,d}$ decomposition is optimal in any graph if it exists.  The problem of decomposing a bipartite
graph into vertex-disjoint $K_{2,2}$ is NP-hard~\cite{FederM95}

For $d \ge 3$ we reduce from the problem of finding a $d$-clique decomposition of an arbitrary graph, known to 
be NP-hard~\cite{KirkpatrickH78}.
An instance of a $d$-clique decomposition problem is a graph $G=(V,E)$ and a solution is a partition of $G$
into vertex-disjoint $d$-cliques.

For our reduction we make a bipartite graph $\hat{G}=(\hat{V},\hat{E})$ with $\hat{V} = L \cup R$ and 
$|L|=|R| = |V|$ and $(l_i, r_j) \in \hat{E}$ if $(v_i,v_j) \in E$ or $i = j$.
Again, by Theorem~\ref{d2thm}, part 2, there exist settings for $\mu$ and $p$ such that a
$K_{d,d}$ decomposition is optimal.
 Such a decomposition will exist in our case if and only if the original graph $G$ had a $d$-clique
 decomposition.
\hfill$\Box$
\end{proof}

\section{General threshold model}
\label{genModel}

Blume~et~al.~\cite{blume2011networks} consider a generalization of the $(\mu,p)$ model which we will call the general threshold model. In this model, each vertex is assigned a non-negative integer $i$ which represents the number of infected neighbors required to infect that vertex.  If $i=0$, the vertex is infected `by nature'.  We assign these integers randomly and independently according to some common distribution, where $\Pr[ i] =: \mu_i$, and $\sum \mu_i =1$.  The sequence $\{ \mu_i\}$ comprises 
the parameters for the model.  The $\mu, p$ model is a special case of the cascade model with
\begin{equation*}
\mu_i = 
\begin{cases}
\mu & \mbox{if } i = 0 \\
\mu_i = (1 - \mu) p (1-p)^{i-1} & \mbox{if } i \ge 1.  
\end{cases}
\end{equation*}

In the case of $d$ regular graphs, \cite{blume2011networks} shows that for $d=2$, the optimal graphs\footnote{Their choice of objective function is slightly different: they minimize the maximum probability of infection over all vertices.} are collections of disjoint triangles or the $n$-cycle.  For $d \ge 3$, they show that both collections of disjoint $(d+1)$-cliques and the infinite $d$-regular tree can be optimal, but there are choices of parameters for which neither is optimal.  

For half-regular bipartite graphs, already the case $d=1$ shows the richness of this model: each $k$-star can be optimal under some choice of parameters:

\begin{prop}
\label{genModelProp}
 For every $k \ge 1$ there exists $\eps$ small enough so that for the choice of parameters $\mu_0 =.6$, $\mu_1 =\eps$, and $\mu_{k+1} = .4 - \eps$ in the general threshold model, the $k$-star is the optimal $1-$half-regular bipartite graph.  
\end{prop}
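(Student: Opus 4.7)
The plan is to combine the structural decomposition of $1$-half-regular bipartite graphs used in the proof of Theorem~\ref{d1thm} with a small-$\eps$ perturbation analysis. Any feasible graph is a disjoint union of stars centered in $L$ with sizes $j_1,\ldots,j_m$ satisfying $\sum_i j_i = n$, plus $n-m$ isolated $L$-vertices; so, writing $f(j)$ for the expected number of infected vertices in an isolated $j$-star, the objective becomes
\[
\sum_{i=1}^{m} f(j_i) + 0.6\,(n-m),
\]
since an isolated vertex is infected iff it draws threshold $0$.

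To compute $f(j)$, I would use that a leaf has only one neighbor: a leaf with threshold $k+1 \ge 2$ is never infected, and a leaf with threshold $1$ is infected iff the center is. The center has $j$ neighbors, so its threshold-$(k+1)$ case contributes only when $j \ge k+1$, and then requires at least $k+1$ leaves of threshold $0$ (threshold-$1$ leaves need the center first; threshold-$(k+1)$ leaves never activate). Writing $\gamma_j := \Pr[\mathrm{Bin}(j,\,0.6) \ge k+1]$, which vanishes for $j \le k$, a direct calculation gives
\[
f(j) = \begin{cases} 0.6\,(j+1) + \eps\,\psi_j + O(\eps^2), & j \le k, \\ 0.6\,(j+1) + 0.4\,\gamma_j + O(\eps), & j \ge k+1, \end{cases}
\]
where $\psi_j := (1 - 0.4^j) + 0.6\,j$.

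The argument then splits into two regimes. At $\eps = 0$ the threshold distribution is concentrated on $\{0,\,k+1\}$; every configuration using only stars of size $\le k$ yields total expected infection exactly $1.2n$, while any configuration containing a star of size $j^* \ge k+1$ exceeds this by $0.4\,\gamma_{j^*} > 0$. Hence for $\eps$ small enough no optimal configuration uses a large star. Among configurations with all $j_i \le k$, the first-order-in-$\eps$ contribution is $\sum_i \psi_{j_i} = \sum_i j_i \cdot (\psi_{j_i}/j_i)$, which, since $\sum_i j_i = n$, is $\ge n\cdot(\psi_k/k)$ once one checks that $\psi_j/j = (1-0.4^j)/j + 0.6$ is strictly decreasing in $j$ --- an elementary inequality reducing to $0.4^j\,(0.6\,j + 1) < 1$ for all $j \ge 1$. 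Equality forces $j_i = k$ for every $i$, i.e., the $k$-star decomposition (assuming $k \mid n$).

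The main obstacle is coordinating the two regimes so that a single choice of $\eps > 0$ works against all competing configurations at once: the $\Omega(1)$ penalty from any $j^* > k$ star must survive the $O(\eps)$ corrections, and simultaneously the strict $\Theta(\eps)$ gap between the $k$-star decomposition and any other all-$\le k$ configuration must survive the $O(\eps^2)$ error. Because for any fixed $n$ (a multiple of $k$) the feasible set is finite, a single $\eps = \eps(k,n)$ works, which gives the proposition.
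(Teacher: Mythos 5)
Your proposal is correct and, at its core, is the same argument as the paper's: decompose any feasible graph into stars centered in $L$, expand the expected infection count to first order in $\eps$, observe that the per-$R$-vertex first-order coefficient is strictly decreasing in the star size (your $\psi_j/j$ decreasing is exactly the paper's claim that $\frac{1-.4^j}{2j}$ is decreasing, and your $f(j)=.6(j+1)+\eps\psi_j+O(\eps^2)$ is the paper's $\E[I_j]$ before dividing by the $2j$ vertices of a star-plus-isolated-vertices unit), and kill stars of size $j>k$ with the $\Theta(1)$ penalty $0.4\Pr[\mathrm{Bin}(j,.6)\ge k+1]$. The one place you genuinely diverge is in how $\eps$ is chosen. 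The paper tries to prove $\E[I_k]<\E[I_j]$ for all $j\ne k$ with estimates uniform in $j$ (splitting into $j\le 2k$ and $j>2k$), which would give an $\eps$ depending only on $k$; you instead fix $n$, use finiteness of the configuration space, and take $\eps=\eps(k,n)$. Your choice is the safer one: uniformity in $n$ actually fails. Already for $k=1$, the perfect matching has expected infection $1.2n+1.2n\eps$ while the $n$-star with $n-1$ isolated vertices has roughly $1.2n+n\eps+0.4$, so the large star wins once $n\eps>2$ and no single $\eps$ can work for all $n$. So, reading the proposition as a statement for fixed $n$ with $k\mid n$ (the only reading under which it holds), your argument is complete; just state explicitly that the $O(\eps)$ and $O(\eps^2)$ error terms are uniform over the finitely many configurations, which is what licenses a single $\eps$ handling both regimes at once.
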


\begin{proof}
Set the parameters of the general threshold model as above.  For $j\le k$, the expected fraction of infected individuals in a $j$-star with $j-1$ isolated vertices is:
\begin{align*}
\E [I_j] &= \frac{1}{2j} \left [  .6 \cdot 2j + \eps(1-.4^j) + .6 \eps j + O(j \eps^2) \right ] \\
&= .6 + .3 \eps + \frac{1 -.4^j}{2j} \eps + O(\eps^2)
\end{align*}
The function $ \frac{1 -.4^j}{2j}$ is a strictly decreasing function of $j$, so for small enough $\eps$ the $k$-star is better than any $j$-star with $j < k$.  
And for $j>k$,
\begin{align*}
\E [I_j] &\ge .6 + .3 \eps +  \frac{1 -.4^j}{2j} \eps    + \frac{ (.4 -\eps) q_{j,k+1}}{2j} +\frac{ \eps (.4 -\eps)}{2j} \sum_{i=k+1}^{j-1} (j-1) p_{j,i} 
\end{align*}
where $ q_{j,k+1} = \Pr[Bin(j,.6)\ge k+1]$ and $p_{j,i} = \Pr[Bin(j,.6)=i]$.  For $j\le 2 k$, and $\eps $ sufficiently small, $ \frac{ (.4 -\eps) q_{j,k+1}}{2j} > \frac{1 -.4^j}{2j} \eps$ and so $\E [I_j] > \E [I_k]$.  For $j> 2k$, the term $\frac{ \eps (.4 -\eps)}{2j} \sum_{i=k+1}^{j-1} (j-1) p_{j,i} $ is bounded below by $\eps$ times a constant independent of $j$, and so the $k$-star is optimal.  
\hfill$\Box$
\end{proof}









\bibliographystyle{acm}
\bibliography{paper}

\end{document}